\theoremstyle{plain}
\newtheorem{thm}{\protect\theoremname}
\theoremstyle{remark}
\newtheorem{rem}[thm]{\protect\remarkname}
\theoremstyle{plain}
\newtheorem{lem}[thm]{\protect\lemmaname}
\newtheorem{defn}[thm]{\protect\definitionname}
\providecommand{\lemmaname}{Lemma}
\providecommand{\remarkname}{Remark}
\providecommand{\theoremname}{Theorem}
\providecommand{\definitionname}{Definition}
\begin{document}

\title{Stochastic Quantization for the Edwards Measure of Fractional Brownian Motion with $Hd=1$.}

\author{\textbf{Wolfgang Bock}\\
 Technische Universit{\"a}t Kaiserslautern,\\
 Fachbereich Mathematik, Postfach 3049,\\
 67653 Kaiserslautern, Germany\\
 Email: bock@mathematik.uni-kl.de\and \textbf{Torben Fattler}\\
 Technische Universit{\"a}t Kaiserslautern,\\
 Fachbereich Mathematik, Postfach 3049,\\
 67653 Kaiserslautern, Germany\\
 Email: fattler@mathematik.uni-kl.de\and\textbf{ Jos{\'e} Lu{\'\i}s da Silva}\\
 CIMA, University of Madeira, Campus da Penteada,\\
 9020-105 Funchal, Portugal\\
 Email: luis@uma.pt \and \textbf{Ludwig Streit} \\
 BiBoS, Universit{\"a}t Bielefeld, Germany,\\
 CIMA, Unversidade da Madeira, Funchal, Portugal\\
 Email: streit@uma.pt}
\maketitle
\begin{abstract}
In this paper we construct a Markov process which has as invariant measure the fractional
Edwards measure based on a $d$-dimensional fractional Brownian motion, with Hurst index $H$ in the case of $Hd=1$. We use the theory of classical
Dirichlet forms. However since the corresponding self-intersection
local time of fractional Brownian motion is not Meyer-Watanabe differentiable in this case, we show the closability of the form via quasi translation
invariance of the fractional Edwards measure along shifts in the corresponding
fractional Cameron-Martin space. 
\end{abstract}

\tableofcontents

\section{Introduction}

In its original form the Edwards model was a proposal to modify the Wiener
measure $\mu _{0}$ for d-dimensional Brownian motion by a factor which would
exponentially suppress self-intersections of sample paths. Informally

\begin{equation*}
d\mu _{g}=Z^{-1}e^{-gL}d\mu _{0},
\end{equation*}%
where $L$ is the self-intersection local time of Brownian motion, see e.g. \cite{ARHZ}, \cite{bass}, \cite{dFHWS97}, \cite{fcs}, \cite{dvor2}, \cite{he}, 
\cite{imke}, \cite{legall}, \cite{lyons}, \cite{sym}-\cite{yor2}, and $Z$ is
a normalization constant. Motivation for this construction came from polymer
physics (\textquotedblright excluded volume\textquotedblright\ effect),
while Symanzik \cite{sym} introduced the self-intersection local times as a
tool in constructive quantum field theory, see also \cite{dynk}.

A mathematically well-defined version of this ansatz was first given by
Varadhan \cite{varadhan} for $d=2$, and then by Westwater \cite{west} for $%
d=3$ .

"Stochastic quantization" addresses the - largely unresolved - challenge of
constructing random fields $\varphi $ whose probability measure obeys
certain physical postulates from quantum field theory. As introduced by
Parisi and Wu \cite{PW81}, this construction is attempted by introducing an
extra parameter $\tau $ and a stochastic differential equation with regard
to this parameter in such a way that for large $\tau $ the asymptotic
distribution of the Markov process $\varphi _{\tau }$ will satisfy those
postulates.

Conversely, for admissible measures $\mu $, local Dirichlet forms give rise
to such Markov processes with $\mu $ as their invariant measure. For the
2-dimensional Brownian motion Albeverio et.~al.~in \cite{ARHZ} have proven
the admissibility of the Edwards measure, properly renormalized as
elaborated by Varadhan \cite{varadhan}.

In this article we show in the framework of Dirichlet forms, that there
exists a Markov process which has the fractional Edwards measure as
invariant measure for the case that the Hurst parameter $H$ and the
dimension $d$ fulfill $Hd=1$. An analogous construction for  $Hd\leq 1$ 
can be found in \cite{GOSS} using
integration by parts techniques which are not available in this more
singular case. Instead the closability of the local pre-Dirichlet form
will be shown by quasi-translation-invariance w.r.t.\ shifts along the
Cameron-Martin space of fractional Brownian motion. 

In Section \ref{sec:framework} we shall introduce the required concepts and properties, so as
to then present our results and their proof in Section \ref{sec:main-results}.

\section{Preliminaries}
\label{sec:framework}
\subsection{Fractional Brownian Motion}
For $d\in\mathbb{N}$ and \emph{Hurst parameter} $H\in(0,1)$ a \emph{fractional
Brownian motion (fBm) in dimension $d$} is a $\mathbb{R}^{d}$-valued
centered Gaussian process $\big(B_{t}^{{\scriptscriptstyle {H}}}\big)_{t\ge0}$
with covariance, in case $d=1$: 
\begin{equation}
\label{cov-fBm}
\text{cov}_{{\scriptscriptstyle {H}}}(t,s):=\mathbb{E}\big[B_{t}^{{H}}B_{s}^{{H}}\big]=\frac{1}{2}\left(t^{2H}+s^{2H}-|t-s|^{2H}\right),\quad s,t\in[0,\infty).
\end{equation}
In $d$ dimensions we consider $d$ identical independent copies of
one-dimensional fractional Brownian motion.

In order to study the quasi translation invariance of the fractional Edwards measure (introduced below), we need to define the Cameron-Martin space associated to it.
The main role of the Cameron-Martin space is played by the fact that it characterizes precisely those directions in which translations leave the fractional Edwards measure "quasi-invariant" in the sense that the translated measure and the original measure have the same null sets. Here we give an abstract definition of the Cameron-Martin space for a Gaussian measure $\mu$ in a separable Banach space $B$ and later will realize it for the case at hand. The topological dual of the Banach space $B$ is denoted by $B^\prime$.

\begin{defn}[\cite{Hairer09}] \label{def_Cameron-Martin-space}
The Cameron-Martin space $K_\mu$ of a Gaussian measure $\mu$ on a separable reflexive Banach space $(B,\|\cdot\|)$ is the completion of the linear subspace $\tilde{K}_\mu\subset B$ defined by
\[
\tilde{K}_\mu:=\left\{h\in B\,|\,\exists h^*\in B^\prime\;\mathrm{with}\; \int_B h^*(x)\ell(x)\,d\mu(x)=\ell(h),\;\forall \ell\in B^\prime \right\}
\]
with respect to the norm $\| h\|^2_\mu:=\int_B | h^*(x) |^2\,d\mu(x)$. It becomes a Hilbert space when provided with the inner product 
$(h_1,h_2)_\mu:=\int_B h_1^*(x)h_2^*(x)\,d\mu(x)$.
\end{defn}

\begin{rem}
The norm $\|h\|_\mu$, hence the inner product $(h_1,h_2)_\mu$ in $\tilde{K}_\mu$, is well defined, that is they do not depend on the corresponding elements $h^*,h_1^*,h_2^*$ in $B'$, see Remark~3.26 in \cite{Hairer09}.
\end{rem}

To realize the fBm process let  $\Omega=X:=C_0([0,T],\mathbb{R}^{d})$ be the Banach space of all continuous paths in $\mathbb{R}^d$, null at time $0$, equipped with the supremum norm. 
Let
$\mathcal{B}_{H}$ denote the $\sigma$-algebra on $X$ generated by all maps
$X\ni\omega\mapsto B_{t}^{H}(\omega)\in \mathbb{R}^d$, $t\geq0$. The fractional Wiener measure 
on $X_H:=(X,\mathcal{B}_{H})$ we denote by $\nu_H$ and the expectation w.r.t.~$\nu_{{H}}$ is abbreviated by $\mathbb{E}_{H}(\cdot)$.
Let  $X_{H}'$ be the topological dual space of $X_H$ and $L^2:=L^{2}([0,T],\mathbb{R}^{d})$ 
the space of square integrable $\mathbb{R}^{d}$-valued functions on $[0,T]$.
Moreover let $\mathcal{H}_{H}$ be the Hilbert space defined by 
\[
\mathcal{H}_{H}=\overline{\{f\in L^{2}([0,T],\mathbb{R}^{d})\text{ such that }\|M_{H}f\|_{L^{2}}<\infty\}}.
\]
Here the operator $M_{H}$ is given by $(M_{H}f)(t)=\int_{0}^{T}\Lambda_{H}(t,s)f(s)\,ds$,
where $\Lambda_{H}$ is the fractional integral kernel, see \cite[eq.~(2.2)]{Decreusefond1998}
and \cite{Picard2011}. We denote by $\langle\cdot,\cdot\rangle_{H}=\langle M_{H}\cdot,M_{H}\cdot\rangle_{L^{2}}$
the inner product on $\mathcal{H}_{H}$. By identifying the Hilbert
space $\mathcal{H}_{H}$ with its dual we obtain the rigging $X_{H}\subset\mathcal{H}_{H}\subset X_{H}'$.
The dual pairing in a natural way generalizes the inner product on
$\mathcal{H}_{H}$.

Following \cite{NS09} a fractional version of the Cameron--Martin space $K_H$ of $\nu_H$ is hence given by
\begin{equation}
\label{CM-space}
K_H:=\left\{k:[0,T]\longrightarrow\mathbb{R}^d\,|\,\exists h\in L^2([0,T],\mathbb{R}^d),\; k_t=\int_0^tR_H(t,s)h(s)\,ds \right\},
\end{equation}
where $R_H$ is the square integrable kernel defined by
\[
R_H(t,s):=C_Hs^{\frac{1}{2}-H}\int_s^t(u-s)^{H-\frac{3}{2}}u^{H-\frac{1}{2}}\,du,\quad t>s,
\]
with $C_H=\sqrt{\frac{H(2H-1)}{\beta(2-2H,H-\frac{1}{2})}}$ and $\beta$ denotes the beta function. For $t\le s$ we put $R_H(t,s)=0$. The kernel $R_H$ is related to the covariance function of fBm in \eqref{cov-fBm} through the identity
\[
\mathrm{cov}_H(t,s)=\int_0^{t\wedge s}R_H(t,r)R_H(s,r)\,dr.
\] 

For a fBm $B^{H}=\{B_{t}^{H},\;t\ge0\}$ in $\mathbb{R}^{d}$ a shift along the
Cameron-Martin space $X^{H,u,k}$ is defined by 
\[
X^{H,u,k}:=\{X_{t}^{u,k}:=B_{t}^{H}+uk_{t},\;t\ge0\},\quad u\in\mathbb{R},\;k\in K_H.
\]
We use the notation
$$
\int \dot{k}\, d B^H
:= \int_0^T \dot{k}(s) \, d B^H_s,
$$
which is defined as in e.g.~\cite{Decreusefond1998}. Note that $\dot{k}$ is a well defined function in $L^2([0,T], \mathbb{R})$ due to \cite{NS09}.

\begin{lem}
\label{rem:CMspace} For a Gaussian measure $\nu$, in particular
for $\nu_{H}$, the shifted measure $\nu\circ\tau_{sk}$, where
$\tau_{sk}(\omega)=\omega+sk$, $s\in\mathbb{R}$ for $k$ from the
corresponding Cameron-Martin space $K_{H}$ is indeed quasi-translation
invariant, hence absolutely continuous w.r.t.~$\nu$, see e.g.~\cite{Kuo96}.
The Radon-Nikodym derivative, in the case of fractional Wiener measure $\nu_{H}$, 
is given by
\[
\frac{d\nu_{H}\circ\tau_{sk}}{d\nu_{H}}(B^{H})=\frac{1}{\mathbb{E}(\exp(s\langle dB^{H},\dot{k}\rangle_{H})}\exp(s\langle dB^{H},\dot{k}\rangle_{H}),\quad s\in\mathbb{R},
\]
where the first expression may be considered as an $L^{2}(\nu_{0})$
limit, $dB^{H}$ denotes the fractional white noise process and $\dot{k}$
the derivative of the function from the Cameron-Martin space $K_{H}$.
See also \cite{Picard2011}.
\end{lem}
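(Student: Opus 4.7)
The plan is to split the assertion into two pieces --- quasi-translation invariance of $\nu_H$ along $K_H$, and the explicit form of the Radon--Nikodym derivative --- and to reduce each piece to a classical result.

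First I would establish quasi-invariance. Since $\nu_H$ is a centered Gaussian measure on the separable Banach space $X=C_0([0,T],\mathbb{R}^d)$, and the space $K_H$ defined in \eqref{CM-space} via the kernel $R_H$ is known to coincide with the abstract Cameron--Martin space of $\nu_H$ in the sense of Definition~\ref{def_Cameron-Martin-space} (as worked out in \cite{NS09}), the classical Cameron--Martin theorem for Gaussian measures on Banach spaces, recorded in \cite{Kuo96}, immediately yields $\nu_H\circ\tau_{sk}\ll\nu_H$ for every $s\in\mathbb{R}$ and every $k\in K_H$.

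Next I would identify the Radon--Nikodym derivative explicitly by reducing to standard Brownian motion. Using the Volterra-type representation $B^H_t=\int_0^tR_H(t,r)\,dW_r$, where $W$ is a standard Brownian motion, any $k\in K_H$ is of the form $k_t=\int_0^tR_H(t,r)h(r)\,dr$ with $h\in L^2([0,T],\mathbb{R}^d)$, so that the shift $B^H\mapsto B^H+sk$ corresponds exactly to the shift $W\mapsto W+s\int_0^\cdot h(r)\,dr$ of the driving Brownian motion. The classical Cameron--Martin/Girsanov formula for $W$ then supplies the density
\[
\exp\!\Big(s\int_0^T h(r)\,dW_r-\tfrac{s^2}{2}\|h\|_{L^2}^2\Big).
\]
Recasting the stochastic integral in terms of the fractional white noise via the isometry $\langle\cdot,\cdot\rangle_H=\langle M_H\cdot,M_H\cdot\rangle_{L^2}$ identifies $\int_0^T h(r)\,dW_r$ with the first-chaos element $\langle dB^H,\dot k\rangle_H$, and identifies $\|h\|_{L^2}^2$ with $\|k\|_H^2$. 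Since $\langle dB^H,\dot k\rangle_H$ is centered Gaussian with variance $\|k\|_H^2$, Gaussian integration gives $\mathbb{E}(\exp(s\langle dB^H,\dot k\rangle_H))=\exp(s^2\|k\|_H^2/2)$, which exactly matches the missing quadratic normalization, so the formula in the lemma follows.

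The main technical point, more a matter of careful bookkeeping than of genuine difficulty, is giving rigorous meaning to $\langle dB^H,\dot k\rangle_H$ and to the pairing $\int\dot k\,dB^H$ used earlier in the excerpt: as a Wiener-type integral against fractional white noise it is not defined pathwise but only as an $L^2(\nu_0)$-limit (equivalently, as an element of the first Wiener chaos of $B^H$). Here I would invoke the constructions in \cite{Decreusefond1998,Picard2011,NS09} to ensure that this random variable is well defined, centered Gaussian with variance $\|k\|_H^2$, and compatible with the $M_H$-isometry used above; once this is settled, the transfer of the Brownian Girsanov formula through $M_H$ is routine and the lemma is established.
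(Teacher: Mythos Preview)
Your proof sketch is correct and follows the standard route: invoke the abstract Cameron--Martin theorem for Gaussian measures on separable Banach spaces to get quasi-invariance, then use the Volterra representation $B^H_t=\int_0^t R_H(t,r)\,dW_r$ to transfer the shift to the driving Brownian motion, apply the classical Girsanov formula there, and pull the density back through the $M_H$-isometry to obtain the stated expression. The bookkeeping you flag about interpreting $\langle dB^H,\dot k\rangle_H$ as a first-chaos element is exactly the right caveat.

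By comparison, the paper does not give a proof of this lemma at all: it is stated as a known fact, with the quasi-invariance attributed to \cite{Kuo96} and the fractional Radon--Nikodym formula to \cite{Picard2011}. So your proposal is not so much a different approach as a written-out argument where the paper simply cites the literature; what you supply is precisely the content behind those references.
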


\subsection{The Edwards Model}
The self-intersection local time of a fractional Brownian motion $B^{H}$
is given informally by 
\[
L(T):=L(T,B^{H}):=\int_{0}^{T}dt\int_{0}^{t}ds\,\delta(B_{t}^{H}-B_{s}^{H}).
\]
However it is well known that, for $Hd=1$ one has  $L(T)=\infty$ $\nu_{H}$-a.e.,
see e.g.~\cite{HN05}. Therefore a renormalization procedure is needed.
Let us use the heat kernel for the approximation of the $\delta$-function
\[
p_{\varepsilon}(x):=\frac{1}{(2\pi\varepsilon)^{\nicefrac{d}{2}}}e^{-\frac{|x|^{2}}{2\varepsilon}}=\frac{1}{(2\pi)^{d}}\int_{\mathbb{R}^{d}}e^{-\frac{\varepsilon}{2}|y|^{2}+i(y,x)}\,dy,\quad x\in\mathbb{R}^{d},
\]
which leads to the approximated self-intersection local time, see
also \cite{HN05} 
\begin{align*}
L_{\varepsilon}(T) & :=\int_{0}^{T}dt\int_{0}^{t}ds\,p_{\varepsilon}(B_{t}^{H}-B_{s}^{H})\\
 & =\frac{1}{(2\pi)^{d}}\int_{0}^{T}dt\int_{0}^{t}ds\int_{\mathbb{R}^{d}}e^{-\frac{\varepsilon}{2}|y|^{2}+i(y,B_{t}^{H}-B_{s}^{H})}\,dy.
\end{align*}
Moreover, as in \cite{varadhan} one has to center the local time
in order to perform the limit later on. Hence we define: 
\[
L_{\varepsilon,c}(T):=L_{\varepsilon}(T)-\mathbb{E}\big(L_{\varepsilon}(T)\big).
\]

In \cite{HN05} it is shown that for $\varepsilon\to0$ there exist
a limit of $L_{\varepsilon,c}(T)$ in the space of square integrable
functions. We denote: 
\[
L_{\varepsilon,c}(T)\to L_c=L_{c}(T),\quad\varepsilon\to0.
\]

In the case $Hd=1$, it is shown in \cite{GOSS} that, under certain
conditions on the coupling constant $g$, one has that the random variable
$e^{-gL_{c}}$ is a well defined object as an integrable function
w.r.t.~$\nu_{H}$. Hence we can define the fractional Edwards measure
in this case by 
\[
d\nu_{H,g}:=\frac{1}{\mathbb{E}(e^{-gL_{c}})}e^{-gL_{c}}\,d\nu_{H}.
\]
\begin{rem}
\begin{enumerate}
\item Note that by this definition $\nu_{H,g}$ is indeed a probability measure which is absolutely continuous
w.r.t.~the fractional Wiener measure $\nu_{H}$. We will hence use several times
that properties are holding $\nu_{H}$-a.e.~and hence $\nu_{H,g}$-a.e.
\item Notice also that the existence of the density as an $L^{1}(\nu_{H})$
function is not trivial due to the fact that, after centering the
random variable $L_{c}$ can indeed take negative values and the exponential
could become infinity. The ensurance of integrability, at least for
mild assumptions on $g$ is done in \cite{GOSS}.
\item The existence of certain exponential moments of $L_c$ was studied in \cite{HNS06}.
Due to this property the measure $\nu_{H,g}$ is also defined at least for some
negative $g$.
\end{enumerate}
\end{rem}

In the following we shall restrict our considerations to coupling constants $g$ such that $e^{-gL_{c}}\in L^{1}(\nu_{H})$, 
see \cite{GOSS}.

\subsection{Dirichlet Forms}
For the stochastic quantization we will use classical Dirichlet forms
of gradient type in the sense of \cite{AR91}. We start with a densely defined bilinear form of gradient type
$$
\mathbb{E}(f,g)= \int \langle \nabla f, \nabla g \rangle d m
$$ in a suitable $L^2(m)$ space and show closability. In many particular cases,
as in \cite{BFS16}, this can be done by an integration by parts argument.
Here however, due to the lack of Meyer-Watanabe differentiability
of the self-intersection local time for the case $Hd=1$, see e.g.~\cite{Hu2001} 
for the fBm case, the techniques are more involved.
Instead we show quasi-invariance of the fractional Edwards measure $\nu_{H,g}$
with respect to shifts in the Cameron-Martin space $K_H$ of fBm. 
Details on Dirichlet forms can be found in the monographs
\cite{BH91, FOT, MR92} and for the gradient Dirichlet forms, see \cite{AR91}.

As mentioned above we consider classical gradient Dirichlet forms, hence 
we have to introduce the gradient. To this end, at first we define the space of smooth cylinder functions. 
For a topological vector space $(\mathcal{X},\tau)$ we define the
set of smooth bounded cylinder functions
\[
\mathcal{F}C_{b}^{\infty}(\mathcal{X}):=\left\{ f(l_{1},\dots,l_{n})\,|\,n\in\mathbb{N},\,f\in C_{b}^{\infty}(\mathbb{R}^{n}),\,l_{1},\dots,l_{n}\in\mathcal{X}'\right\} ,
\]
where $C_{b}^{\infty}(\mathbb{R}^{n})$ is the space of bounded infinitely often differentiable
functions on $\mathbb{R}^{n}$, where all partial derivatives are
also bounded. 

For
$u\in\mathcal{F}C_{b}^{\infty}(X_H)$ and $\omega\in X_H$, following 
the notation \cite{AR91},  we define 
\[
\frac{\partial u}{\partial k}(\omega):=\frac{d}{ds}u(\omega+sk){\big|_{s=0}}.
\]
By $\nabla u(\omega)$ we denote the unique element in $\mathcal{H}_H$ such that
\[
\langle\nabla u(\omega),k\rangle_{H}=\frac{\partial u}{\partial k}(\omega),\quad\text{for all }k\in K_{H}.
\]


\begin{thm}
\label{thm closable} The bilinear form 
\[
\mathcal{E}_{H}(u,v):=\mathbb{E}_{H}(e^{-gL_c}\nabla u\cdot\nabla v),\quad u,v\in\mathcal{F}C_{b}^{\infty}(X_{H})
\]
is a symmetric pre-Dirichlet form, i.e., in particular closable, and gives
rise to a local, quasi-regular symmetric Dirichlet form in $L^{2}(X_{H},\nu_{{H,g}})$. 
\end{thm}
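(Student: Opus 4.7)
\medskip

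\noindent\textbf{Proof plan.} Symmetry of $\mathcal{E}_H$ is built into the definition; the substantive properties to verify are closability (the central point), the Markov/contraction property, locality, and quasi-regularity.

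\emph{Step 1 -- Quasi-invariance of $\nu_{H,g}$.} Lemma~\ref{rem:CMspace} gives quasi-invariance of the fractional Wiener measure $\nu_H$ under every shift $\tau_{sk}$, $k\in K_H$. Since $Z^{-1}e^{-gL_c}>0$ $\nu_H$-a.e., the measures $\nu_H$ and $\nu_{H,g}$ are mutually absolutely continuous, and a direct change of variables delivers
\begin{equation*}
\phi_s^{(k)}(\omega):=\frac{d(\nu_{H,g}\circ\tau_{sk})}{d\nu_{H,g}}(\omega)=\exp\!\bigl(g\bigl(L_c(\omega)-L_c(\omega-sk)\bigr)\bigr)\,\frac{d(\nu_H\circ\tau_{sk})}{d\nu_H}(\omega),
\end{equation*}
finite and strictly positive $\nu_{H,g}$-a.e.

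\emph{Step 2 -- Closability without differentiating $L_c$.} To bypass the failure of Meyer-Watanabe differentiability of $L_c$ at $Hd=1$ (cf.~\cite{Hu2001}), the idea is to use quasi-invariance only through $L^2$-continuity of $\phi_s^{(k)}$ at $s=0$, never through a direct derivative of $L_c$. Take $(u_n)\subset\mathcal{F}C_b^\infty(X_H)$ with $u_n\to 0$ in $L^2(\nu_{H,g})$ and $\nabla u_n\to v$ in $L^2(\nu_{H,g};\mathcal{H}_H)$, so that the task reduces to proving $v=0$. Quasi-invariance plus Fubini produce the two equivalent representations
\begin{equation*}
\int\frac{u_n(\cdot+sk)-u_n(\cdot)}{s}\,\phi\,d\nu_{H,g}=\frac{1}{s}\int u_n\bigl[\phi(\cdot-sk)\phi_s^{(k)}-\phi\bigr]d\nu_{H,g}=\int(\partial_k u_n)\,\psi_s\,d\nu_{H,g},
\end{equation*}
where $\psi_s(\omega):=s^{-1}\!\int_0^s\phi(\omega-tk)\phi_t^{(k)}(\omega)\,dt$. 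For each fixed $s\neq 0$ the middle expression shows the common value vanishes as $n\to\infty$ (from $u_n\to 0$ in $L^2$), while the right-hand expression converges to $\int\langle v,k\rangle_H\psi_s\,d\nu_{H,g}$; hence this integral is $0$ for every $s\neq 0$, and letting $s\to 0$ with $\psi_s\to\phi$ in $L^2(\nu_{H,g})$ yields $\int\langle v,k\rangle_H\phi\,d\nu_{H,g}=0$ for all $\phi\in\mathcal{F}C_b^\infty(X_H)$ and $k\in K_H$, whence $v=0$.

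\emph{Step 3 -- Markov, locality, quasi-regularity.} The Dirichlet (unit-contraction) property on $\mathcal{F}C_b^\infty(X_H)$ follows in the standard way from the chain rule $\nabla(F\circ u)=F'(u)\nabla u$ applied to smooth approximations of normal contractions; it is inherited by the closure. Locality is immediate: $\nabla u\cdot\nabla v\equiv 0$ pointwise whenever smooth cylinder functions $u,v$ have disjoint supports. For quasi-regularity I would invoke the general framework of Ch.~IV of \cite{MR92}: $X_H$ is a separable Banach (hence Polish) space, $\mathcal{F}C_b^\infty(X_H)$ is a countably generated algebra, dense in $L^2(\nu_{H,g})$ and separating points, and finite-dimensional cylindrical projections supply an $\mathcal{E}_H$-nest of compacts.

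\emph{Main obstacle.} The central analytic difficulty is the $L^2(\nu_{H,g})$-continuity $\phi_s^{(k)}\to 1$ as $s\to 0$, which by the factorization of Step~1 reduces to an $L^p$-continuity of $s\mapsto L_c(\cdot-sk)$ along Cameron-Martin shifts -- strictly weaker than Meyer-Watanabe differentiability of $L_c$, but upgrading it from $\nu_H$ to $\nu_{H,g}$ still demands absorbing the Gaussian Radon-Nikodym factor $d(\nu_H\circ\tau_{sk})/d\nu_H$ against the possibly large weight $Z^{-1}e^{-gL_c}$; H\"older's inequality combined with the exponential moment estimates for $L_c$ of \cite{HNS06} are expected to carry this step.
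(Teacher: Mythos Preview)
Your outline correctly isolates quasi-invariance of $\nu_{H,g}$ along $K_H$ as the mechanism for closability, and Steps~1 and~3 are in line with the paper. Step~2, however, proceeds along a genuinely different route from the paper and leaves open precisely the point where the two approaches diverge.

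The paper does not attempt $L^2(\nu_{H,g})$-continuity of $s\mapsto\phi_s^{(k)}$ at all. Instead it establishes \emph{pathwise} (that is, $\nu_H$-a.e.) continuity of $s\mapsto a_{sk}$ (Theorem~\ref{thm_main}) and then invokes Corollary~2.5 of \cite{AR91} directly, after decomposing $\mathcal{E}_H$ along an orthonormal basis $(k_n)\subset K_H$ and reducing to one direction at a time via Proposition~I.3.7 of \cite{MR92}. The pathwise continuity comes from a Kolmogorov moment estimate: Lemma~\ref{lemma1} shows, by an explicit Gaussian-integral computation with the approximants $L_{\varepsilon,c}$, that
\[
\|L_{\varepsilon,c}(T,u,k)-L_{\varepsilon,c}(T,v,k)\|_{L^2(\nu_H)}^{2}\le C|u-v|^{1+\gamma}
\]
uniformly in $\varepsilon$; passing to the limit (Lemma~\ref{lemma2}) yields the same bound for $Y(u,k)=L_c(\cdot+uk)-L_c(\cdot)$, and Kolmogorov's criterion then produces a version of $Y(\cdot,k)$, hence of $a_{\cdot k}$, continuous in the shift parameter. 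The decisive feature is that this estimate lives entirely in $L^2(\nu_H)$: the density $e^{-gL_c}$ never enters, and the passage to $\nu_{H,g}$ is free because a $\nu_H$-a.e.\ statement is automatically a $\nu_{H,g}$-a.e.\ statement by absolute continuity.

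Your route, by contrast, needs $\phi_s^{(k)}\in L^2(\nu_{H,g})$ for each fixed $s\neq 0$ (otherwise neither the middle term nor $\psi_s$ in your display is under control), and after your own Step~1 formula this unpacks into
\[
\int e^{2g\bigl(L_c-L_c(\cdot-sk)\bigr)}\,e^{-gL_c}\,d\nu_H<\infty,
\]
i.e., exponential integrability of $L_c$ at parameters strictly larger than the $g$ for which the Edwards measure is merely assumed to exist. The estimates of \cite{HNS06} may deliver this for \emph{some} $g$, but not across the full range the paper allows; your ``expected to carry this step'' is exactly where the argument is incomplete, and it is not a harmless technicality---it would shrink the theorem's hypothesis. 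The paper's Kolmogorov approach sidesteps this completely: it never requires more than $e^{-gL_c}\in L^1(\nu_H)$, because the regularity in $s$ is obtained under the reference Gaussian measure and then transported pathwise.
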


The proof of Theorem~\ref{thm closable} is given in Section~\ref{sec:main-results} which contains the proofs and main results. 
As indicated above we show closability of the bilinear form via quasi-translation invariance along shifts in the Cameron-Martin space $K_H$. 

\begin{rem}
	As in \cite[Cor.~10.8]{HKPS93} we obtain that the closures of $\big(\mathcal{E}_{H},\mathcal{F}C_b^\infty(X_H)\big)$ and $\big(\mathcal{E}_{H},\mathcal{P}\big)$ coincide, where $\mathcal{P}\subset L^{2}(X_{H},\nu_{{H,g}})$ denotes the dense subspace of polynomials. 
\end{rem}

\section{Main Results and Proofs}
\label{sec:main-results}
Crucial for the results of this paper is the following theorem. 

\begin{thm}
\label{thm_main} Let $k\in K_{H}$ be given and 
\[
a_{sk}:=\frac{d\nu_{H,g}\circ\tau_{sk}}{d\nu_{H,g}},\quad s\in\mathbb{R}.
\]
Then the process $(a_{sk})_{s\in\mathbb{R}}$ has a version which
has $\nu_{H}$-a.e. (and hence $\nu_{H,g}$-a.e.) continuous sample
paths. 
\end{thm}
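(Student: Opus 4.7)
The plan is to derive an explicit formula for $a_{sk}$ and then apply Kolmogorov's continuity criterion in the parameter $s$. Combining $d\nu_{H,g} = Z^{-1}e^{-gL_c}\,d\nu_H$ with the Cameron--Martin formula of Lemma~\ref{rem:CMspace} and the change of variables $\omega\mapsto\omega+sk$, one obtains
\[
a_{sk}(\omega) = D_{sk}(\omega)\,\exp\!\bigl(g[L_c(\omega)-L_c(\omega-sk)]\bigr),
\]
where $D_{sk}(\omega):=d(\nu_H\circ\tau_{sk})/d\nu_H(\omega)$ is the log-quadratic Gaussian density supplied by Lemma~\ref{rem:CMspace}. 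Since $\langle dB^H,\dot k\rangle_H$ is a real-valued Gaussian random variable, $s\mapsto D_{sk}(\omega)$ is smooth for $\nu_H$-a.e.~$\omega$; the work lies in handling the second factor.

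Next I would establish the Kolmogorov-type bound
\[
\mathbb{E}_H\bigl[|a_{s_1k}-a_{s_2k}|^q\bigr] \le C_K\,|s_1-s_2|^{1+\beta}
\]
for every compact $K\subset\mathbb{R}$, with some $q\ge 2$ and $\beta>0$. Applying the mean value theorem to the two exponentials and Hölder's inequality reduces this to three ingredients: (i) $L^p$-bounds on $D_{sk}$ and $\partial_s D_{sk}$, immediate from Gaussian integrability; (ii) $L^p$-bounds on $e^{\pm gL_c(\cdot-sk)}$ uniformly in $s\in K$, which follow from the exponential-integrability results of \cite{HNS06} and \cite{GOSS} after transferring to the unshifted variable via Cameron--Martin and using that the CM density has all Gaussian moments uniformly in $s\in K$; and (iii) the Hölder-in-shift estimate
\[
\mathbb{E}_H\bigl[|L_c(B^H+hk)-L_c(B^H)|^p\bigr] \le C\,|h|^{p\alpha},\qquad \alpha>0.
\]
For (iii) I would use the Fourier representation of the approximations $L_{\varepsilon,c}$: the shift by $hk$ introduces a factor $e^{-ih(y,k_t-k_{s'})}-1$ of size $O(|h|^\gamma|y|^\gamma|k_t-k_{s'}|^\gamma)$ for $\gamma\in(0,1)$; integration against $e^{-\varepsilon|y|^2/2}$ and the fBm covariance, followed by the chaos-level $\varepsilon\downarrow 0$ limit as in \cite{HN05}, yields the desired power of $|h|$. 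Taking $q$ so large that $q\min(1,\alpha)>1$ and invoking Kolmogorov's criterion on each compact interval then produces the continuous modification.

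The main obstacle is precisely (iii): in the critical regime $Hd=1$, the chaos kernels of $L_c$ carry logarithmic singularities, and $L_c$ is not Meyer--Watanabe differentiable (which is exactly why the Malliavin-calculus approach of \cite{GOSS} for $Hd<1$ is unavailable in our setting), so one must carry out the estimate either termwise in the chaos expansion or directly via the Fourier representation, with $\gamma$ chosen strictly below $1$ so that the possible logarithmic loss is absorbed into a strictly positive power of $|h|$. Once this is secured, the Hölder moment estimate and Kolmogorov's theorem deliver a version of $(a_{sk})_{s\in\mathbb{R}}$ with $\nu_H$-a.e.\ continuous sample paths, as required.
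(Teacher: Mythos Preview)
Your overall strategy---derive an explicit formula for $a_{sk}$ and invoke Kolmogorov's criterion---is exactly the paper's, and your ingredient (iii) is essentially the paper's Lemma~\ref{lemma1}/\ref{lemma2}. The difference is \emph{where} Kolmogorov is applied, and this matters.

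The paper does not apply Kolmogorov to $a_{sk}$ itself. Instead it applies it to the additive quantity $Y(u,k):=L_c(B^H+uk)-L_c(B^H)$. Lemma~\ref{lemma1} (Fourier representation, exactly as you sketch for (iii)) together with the $\varepsilon\downarrow 0$ limit gives $\|Y(u,k)-Y(v,k)\|_{L^2}^2\le C|u-v|^{1+\gamma}$, and this $L^2$ bound already suffices for Kolmogorov on a one-parameter process. One obtains a version $\tilde Y(\cdot,k)$ with $\nu_H$-a.s.\ continuous paths, and then simply sets $\tilde a_{uk}:=e^{-g\tilde Y(u,k)}\,D_{uk}$; continuity of $u\mapsto\tilde a_{uk}$ is now a pathwise statement about composition with the continuous maps $x\mapsto e^{-gx}$ and $u\mapsto D_{uk}$.

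Your route---Kolmogorov directly on $a_{sk}$---forces you through steps (ii) and the $p>2$ version of (iii), and both are problematic under the paper's standing hypotheses. The paper only assumes $e^{-gL_c}\in L^1(\nu_H)$; your step (ii) asks for $e^{\pm gL_c}\in L^p$ for $p$ large, which the cited results do not give for all admissible $g$ (and the ``$+g$'' side is a genuinely different question from the ``$-g$'' side). Likewise your step (iii) for general $p$ is not established anywhere in the paper; only $p=2$ is proved, and hypercontractivity does not apply directly since $L_c$ is not in a fixed Wiener chaos. None of this is needed: move Kolmogorov from $a_{sk}$ to $Y(\cdot,k)$, keep the $L^2$ estimate you already outlined, and then compose pathwise.
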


We denote by $L(T,u,k)$ 
the self-intersection local time of $X^{H,u,k}$ and similarly for resp.\ $L_{\varepsilon}(T,u,k)$ and $L_{\varepsilon,c}(T,u,k)$ :
\begin{align*}
L(T,u,k) & :=\int_{0}^{T}dt\int_{0}^{t}ds\,\delta(X_{t}^{u,k}-X_{s}^{u,k}),\\
L_{\varepsilon}(T,u,k) & :=\frac{1}{(2\pi)^{d}}\int_{0}^{T}dt\int_{0}^{t}ds\int_{\mathbb{R}^{d}}dy\,e^{-\frac{\varepsilon}{2}|y|^{2}+i(y,X_{t}^{u,k}-X_{s}^{u,k})},\\
L_{\varepsilon,c}(T,u,k) & :=L_{\varepsilon}(T,u,k\big)-\mathbb{E}\big(L_{\varepsilon}\big(T\big)\big).
\end{align*}

To prove Theorem \ref{thm_main} we need the following two lemmata.
\begin{lem}
\label{lemma1} Let $\gamma\in(0,1)$ and $k\in K_{H}$ be given.
Then there exists a positive constant $C$ such that 
\begin{equation}
\|L_{\varepsilon,c}(T,u,k)-L_{\varepsilon,c}(T,v,k)\|_{L^{2}}^{2}\le C|u-v|^{1+\gamma}\label{eq:expectation1}
\end{equation}
for all $u,v\in\mathbb{R}$ and $\varepsilon>0$. 
\end{lem}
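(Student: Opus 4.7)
The first simplification is that the centering constant $\mathbb{E}(L_\varepsilon(T))$ does not depend on the shift parameter, so
\[
L_{\varepsilon,c}(T,u,k) - L_{\varepsilon,c}(T,v,k) = L_\varepsilon(T,u,k) - L_\varepsilon(T,v,k).
\]
Writing $\Delta_j B = B_{t_j}^H - B_{s_j}^H$ and $\Delta_j k = k_{t_j} - k_{s_j}$ and using the Fourier representation of $p_\varepsilon$, the difference on the right-hand side equals
\[
\frac{1}{(2\pi)^d}\int_0^T\!\!dt\int_0^t\!\!ds\int_{\mathbb{R}^d} e^{-\tfrac{\varepsilon}{2}|y|^2}e^{i(y,\Delta B)}\bigl[e^{iu(y,\Delta k)}-e^{iv(y,\Delta k)}\bigr]dy.
\]
Squaring, taking $\nu_H$-expectation and evaluating the Gaussian characteristic function of $(\Delta_1 B,\Delta_2 B)$ via Fubini produces a representation of $\|L_{\varepsilon,c}(T,u,k)-L_{\varepsilon,c}(T,v,k)\|_{L^2}^{2}$ as a fourfold time integral $\int dt_1 ds_1 dt_2 ds_2$ coupled to a Gaussian integral over $(y_1,y_2)\in\mathbb{R}^{2d}$ with weight $\exp(-\varepsilon(|y_1|^2+|y_2|^2)/2 - Q(y_1,y_2)/2)$, where $Q(y_1,y_2)=a_{11}|y_1|^2+2a_{12}(y_1,y_2)+a_{22}|y_2|^2$ is the covariance quadratic form of $(\Delta_1 B,\Delta_2 B)$, multiplied by the product $\prod_{j=1,2}[e^{iu(y_j,\Delta_j k)}-e^{iv(y_j,\Delta_j k)}]$.

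The Hölder factor $|u-v|^{1+\gamma}$ is extracted by the elementary pointwise interpolation
\[
|e^{iu\theta} - e^{iv\theta}| \le 2^{(1-\gamma)/2}|u-v|^{(1+\gamma)/2}|\theta|^{(1+\gamma)/2},
\]
between the bounds $\le 2$ and $\le |u-v||\theta|$, applied to both factors with $\theta_j = (y_j,\Delta_j k)$. For the Cameron--Martin piece, the Cauchy--Schwarz estimate $|k_t - k_s|\le \|h\|_{L^2}|t-s|^H$ (valid for $k\in K_H$ of the form $k_t = \int_0^t R_H(t,r)h(r)dr$) follows from the identity $\|R_H(t,\cdot)-R_H(s,\cdot)\|_{L^2}^{2}=|t-s|^{2H}$, itself an immediate consequence of the covariance representation $\mathrm{cov}_H(t,s)=\int_0^{t\wedge s}R_H(t,r)R_H(s,r)dr$ recorded in Section~\ref{sec:framework}. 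The remaining integral $\int_{\mathbb{R}^{2d}}|y_1|^{(1+\gamma)/2}|y_2|^{(1+\gamma)/2}e^{-Q/2}dy_1 dy_2$ is a standard Gaussian moment, explicit up to constants in terms of $a_{11}$, $a_{22}$ and $\det A = a_{11}a_{22}-a_{12}^2$.

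I expect the main obstacle to be the finiteness of the resulting fourfold time integral, uniformly in $\varepsilon>0$, precisely in the critical case $Hd=1$. The naïve absolute-value estimate above produces a $(\det A)^{-d/2-(1+\gamma)/2}$ singularity, which is worse than the $(\det A)^{-d/2}$ that controls the already-divergent $\mathbb{E}[L_\varepsilon^2]$ and necessitates the centering procedure in the first place. The sharper bound required here is obtained by exploiting the cancellation between the two phase terms more carefully: writing $e^{iu\theta}-e^{iv\theta}=i(u-v)\theta\int_0^1 e^{i((1-\lambda)v+\lambda u)\theta}d\lambda$ and passing the extra factor $(y_j,\Delta_j k)$ inside the Gaussian integral produces a smoother kernel with an $O((u-v)^2)$ prefactor, which interpolates with the trivial $O(1)$ bound to give the desired $|u-v|^{1+\gamma}$ estimate. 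The resulting time integrand is then of the same structural type as those controlled in \cite{HN05} during the $L^2$-construction of the centered self-intersection local time, and the added factors $|t_j-s_j|^{H(1+\gamma)/2}$ coming from the Hölder regularity of $k\in K_H$ supply exactly the extra decay needed to make the integral convergent at the critical exponent $Hd=1$.
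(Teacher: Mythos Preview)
Your approach coincides with the paper's through the main steps: the centering constants cancel, you pass to the Fourier representation, square and take the $\nu_H$-expectation to obtain the Gaussian kernel $e^{-\frac{1}{2}(y,\Sigma y)}$ with the $2\times2$ covariance block $\Sigma$, and you extract $|u-v|^{2\alpha}$ (with $2\alpha=1+\gamma$) from the difference of the complex exponentials. From that point on the paper proceeds more directly than you anticipate. It bounds $|k_t-k_s|$ only by the supremum norm---using that elements of $K_H$ are continuous on $[0,T]$---rather than by the H\"older estimate $|t-s|^{H}$ you invoke; it evaluates the Gaussian moment in closed form to obtain $C\,[\det(\Sigma+\varepsilon I)]^{-d/2-d\alpha}$; and it dispatches the remaining time integral in one line by citing Lemma~11 of \cite{HN05} together with the observation that for every fixed $\varepsilon>0$ the integrand has no singularities. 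The mean-value representation of the phase and the compensating factors $|t_j-s_j|^{H(1+\gamma)/2}$ that you propose do not appear.

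Your concern about uniformity of $C$ in $\varepsilon$ is not confronted explicitly in the paper's argument; the integrability question is simply delegated to the \cite{HN05} reference. In that sense you are being more scrupulous than the paper. Be aware, however, that your proposed remedy via the H\"older regularity of $k$ produces factors $\lambda^{(1+\gamma)/4}\rho^{(1+\gamma)/4}$ which help only near $s_j=t_j$; they do nothing at the diagonal $(s,t)=(s',t')$, where $\lambda=\rho=\mu$ and $\det\Sigma$ still vanishes while $\lambda,\rho$ stay bounded away from zero. So the second half of your sketch would still require a separate mechanism at that singularity, and you should not expect the $K_H$-H\"older factors alone to close the gap.
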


\begin{proof}
Explicitely  \eqref{eq:expectation1}.
\begin{align*}
 & L_{\varepsilon,c}\big(T,u,k)-L_{\varepsilon,c}\big(T,v,k)=L_{\varepsilon}(T,u,k\big)-L_{\varepsilon}(T,v,k\big)\\
= & \frac{1}{(2\pi)^{d}}\left(\int_{0}^{T}dt\int_{0}^{t}ds\int_{\mathbb{R}^{d}}dy\,\big(e^{-\frac{\varepsilon}{2}|y|^{2}}(e^{i(y,X_{t}^{u,k}-X_{s}^{u,k})}-e^{i(y,X_{t}^{v,k}-X_{s}^{v,k})}\big)\,dy\right)\\
= & \frac{1}{(2\pi)^{d}}\left(\int_{0}^{T}dt\int_{0}^{t}ds\int_{\mathbb{R}^{d}}dy\,e^{-\frac{\varepsilon}{2}|y|^{2}}\big(e^{iu(y,k_{t}-k_{s})}-e^{iv(y,k_{t}-k_{s})}\big)e^{i(y,B_{t}^{H}-B_{s}^{H})}\,dy\right)
\end{align*}
which implies 
\begin{align*}
 & \big|L_{\varepsilon,c}\big(T,u,k)-L_{\varepsilon,c}\big(T,v,k)\big|^{2}\\
 & =\frac{1}{(2\pi)^{2d}}\int_{\mathcal{T}}d\tau\int_{\mathbb{R}^{2d}}dy\,e^{-\frac{\varepsilon}{2}(|y_{1}|^{2}+|y_{2}|^{2})}\big(e^{iu(y_{1},k_{t}-k_{s})}-e^{iv(y_{1},k_{t}-k_{s})}\big)\\
 & \times\big(e^{-iu(y_{2},k_{t'}-k_{s'})}-e^{-iv(y_{2},k_{t'}-k_{s'})}\big)e^{i(y_{1},B_{t}^{H}-B_{s}^{H})-i(y_{2},B_{t'}^{H}-B_{s'}^{H})},
\end{align*}
where $d\tau=dsdtds'dt'$, $dy=dy_{1}dy_{2}$ and 
\[
\mathcal{T}:=\{(s,t,s',t'):0<s<t<T,\;0<s'<t'<T\}.
\]
Computing the expectation 
\begin{equation}
\mathbb{E}\big(e^{i(y_{1},B_{t}^{H}-B_{s}^{H})-i(y_{2},B_{t'}^{H}-B_{s'}^{H})}\big)=\prod_{j=1}^{d}\mathbb{E}\big(e^{i(y_{1j},B_{t}^{H,j}-B_{s}^{H,j})-i(y_{2j},B_{t'}^{H,j}-B_{s'}^{H,j})}\big)=e^{-\frac{1}{2}(y,\Sigma y)},\label{eq:expectation2}
\end{equation}
where $y=\left(\begin{smallmatrix}y_{1}\\
y_{2}
\end{smallmatrix}\right)$ and $\Sigma=\left(\begin{smallmatrix}\lambda & \mu\\
\mu & \rho
\end{smallmatrix}\right)$ is a symmetric matrix with 
\begin{align*}
\lambda & =|t-s|^{2H},\\
\rho & =|t'-s'|^{2H},\\
\mu & =\frac{1}{2}\left(|t-s'|^{2H}+|t'-s|^{2H}-|t-t'|^{2H}-|s-s'|^{2H}\right).
\end{align*}
Thus, the lhs of \eqref{eq:expectation1} is equal to 
\begin{multline*}
\|L_{\varepsilon,c}(T,u,k)-L_{\varepsilon,c}(T,v,k)\|_{L^{2}}^{2}\\
=\frac{1}{(2\pi)^{2d}}\int_{\mathcal{T}}d\tau\int_{\mathbb{R}^{2d}}dy\,e^{-\frac{1}{2}\left((y,\Sigma y)+\varepsilon|y|^{2}\right)}\\
\times\big(e^{iu(y_{1},k_{t}-k_{s})}-e^{iv(y_{1},k_{t}-k_{s})}\big)\big(e^{-iu(y_{2},k_{t'}-k_{s'})}-e^{-iv(y_{2},k_{t'}-k_{s'})}\big).
\end{multline*}
Notice that for any given $\alpha\in(0,1]$ there exists a constant
$C\in(0,\infty)$ (from now on, the constant $C$ might be different
from line to line) such that 
\begin{align*}
|\cos(x)-\cos(y)| & \leq C|x-y|^{\alpha}\wedge1,\\
|\sin(x)-\sin(y)| & \leq C|x-y|^{\alpha}\wedge1.
\end{align*}
On the other hand, we have 
\begin{align*}
 & \big(e^{iu(y_{1},k_{t}-k_{s})}-e^{iv(y_{1},k_{t}-k_{s})}\big)\big(e^{-iu(y_{2},k_{t'}-k_{s'})}-e^{-iv(y_{2},k_{t'}-k_{s'})}\big)\\
 & =\big(\cos(u(y_{1},k_{t}-k_{s}))-\cos(v(y_{1},k_{t}-k_{s}))\big)\big(\cos(u(y_{2},k_{t'}-k_{s'}))-\cos(v(y,k_{t'}-k_{s'}))\big)\\
 & -\big(\sin(u(y_{1},k_{t}-k_{s}))-\sin(v(y_{1},k_{t}-k_{s}))\big)\big(\sin(v(y_{2},k_{t'}-k_{s'}))-\sin(u(y_{2},k_{t'}-k_{s'}))\big)\\
 & +i(\text{cross\;terms}),
\end{align*}
where the ``cross terms'' are odd functions. Hence the $y_{1},y_{2}$-integral
with these functions vanishes. Finally we obtain the following estimate
for the lhs of \eqref{eq:expectation1} . 
\begin{multline*}
\|L_{\varepsilon,c}(T,u,k)-L_{\varepsilon,c}(T,v,k)\|_{L^{2}}^{2}\\
\leq C|u-v|^{2\alpha}\int_{\mathcal{T}}d\tau\int_{\mathbb{R}^{2d}}dy\,e^{-\frac{1}{2}\left((y,\Sigma y)+\varepsilon|y|^{2}\right)}(|y_{1}||y_{2}|)^{2\alpha}.
\end{multline*}
Here we used the fact that functions from the Cameron-Martin space
are continuous since they are given as fractional integral operators acting on square
integrable functions, which allows to bound them in supremum norm, see \cite{Picard2011}.

If we denote by $I_{d}$ the $d\times d$ identity matrix, then the
Gaussian integral is equal to 
\[
\int_{\mathbb{R}^{2d}}dy\,e^{-\frac{1}{2}\left((y,\Sigma y)+\varepsilon|y|^{2}\right)}(|y_{1}||y_{2}|)^{2\alpha}=2^{d(2\alpha+1)}\Gamma\left(\alpha+\frac{1}{2}\right)^{2d}\frac{1}{\det(\Sigma+\varepsilon I_{d})^{\frac{d}{2}+d\alpha}},
\]
Summarizing we obtain 
\[
\|L_{\varepsilon,c}(T,u,k)-L_{\varepsilon,c}(T,v,k)\|_{L^{2}}^{2}\le C|u-v|^{2\alpha}\int_{\mathcal{T}}d\tau\frac{1}{[(\lambda+\varepsilon)(\rho+\varepsilon)-2\mu]^{\frac{d}{2}+d\alpha}}<\infty,
\]
by Lemma 11 in \cite{HN05} and the fact that for every $\varepsilon>0$
the above integral has no singularities. Taking $\alpha\in\left[\frac{1}{2},1\right)$ yields the desired statement. 
\end{proof}
\begin{lem}
\label{lemma2} Let $Y(u,k):=L_{c}(B^{H}+uk)-L_{c}(B^{H}),$ for $u\in\mathbb{R}$
and $k\in K_{H}$. Then for any $\gamma\in(0,1)$, there exists a
constant $0<C<\infty$ such that 
\[
\mathbb{E}\|Y(u,k)-Y(v,k)\|^{2}\leq C|u-v|^{1+\gamma},\quad u,v\in\mathbb{R}.
\]
\end{lem}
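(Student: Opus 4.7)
The plan is to deduce Lemma~\ref{lemma2} from Lemma~\ref{lemma1} by passing to the limit $\varepsilon \to 0$ and invoking Fatou's lemma. The key preliminary observation is that the centering term $\mathbb{E}(L_\varepsilon(T))$ in the definition of $L_{\varepsilon,c}(T,u,k)$ does \emph{not} depend on $u$, so that
\begin{equation*}
L_{\varepsilon,c}(T,u,k) - L_{\varepsilon,c}(T,v,k) \;=\; L_\varepsilon(T,u,k) - L_\varepsilon(T,v,k)
\end{equation*}
for every $\varepsilon>0$. Consequently, Lemma~\ref{lemma1} provides the uniform-in-$\varepsilon$ bound
\begin{equation*}
\bigl\| L_\varepsilon(T,u,k) - L_\varepsilon(T,v,k) \bigr\|_{L^2(\nu_H)}^2 \;\leq\; C\,|u-v|^{1+\gamma}.
\end{equation*}

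The second step is to identify the limit. By the result of \cite{HN05} recalled earlier, $L_{\varepsilon,c}(T) \to L_c(T)$ in $L^2(\nu_H)$. Since by Lemma~\ref{rem:CMspace} the shifted measure $\nu_H \circ \tau_{-uk}$ is mutually absolutely continuous with $\nu_H$ for every $u \in \mathbb{R}$ and $k \in K_H$, the composition $L_c(B^H + uk) =: L_c(T,u,k)$ is well defined $\nu_H$-almost surely, and pulling back the $L^2(\nu_H)$-convergence along the shift gives $L_{\varepsilon,c}(T,u,k) \to L_c(T,u,k)$ in $\nu_H$-probability as $\varepsilon \to 0$. Extracting a diagonal subsequence $\varepsilon_n \to 0$ along which convergence holds $\nu_H$-a.s.\ for both parameters $u$ and $v$, one obtains
\begin{equation*}
L_{\varepsilon_n}(T,u,k) - L_{\varepsilon_n}(T,v,k) \;\xrightarrow[n\to\infty]{}\; L_c(T,u,k) - L_c(T,v,k) \;=\; Y(u,k) - Y(v,k)
\end{equation*}
$\nu_H$-almost surely.

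Finally, combining this a.s.\ convergence with the uniform $L^2(\nu_H)$-bound from the first step via Fatou's lemma yields
\begin{equation*}
\mathbb{E}\bigl| Y(u,k) - Y(v,k) \bigr|^2 \;\leq\; \liminf_{n\to\infty} \bigl\| L_{\varepsilon_n}(T,u,k) - L_{\varepsilon_n}(T,v,k) \bigr\|_{L^2(\nu_H)}^2 \;\leq\; C\,|u-v|^{1+\gamma},
\end{equation*}
which is the claimed inequality. The main technical point to address carefully is the transfer of the $L^2(\nu_H)$-convergence of $L_{\varepsilon,c}(T)$ to the shifted process $X^{H,u,k}$: although quasi-invariance from Lemma~\ref{rem:CMspace} guarantees a well-defined limit, the Radon--Nikodym density $d\nu_H\circ\tau_{sk}/d\nu_H$ is in general not in $L^\infty$, so only convergence in $\nu_H$-probability (not directly in $L^2(\nu_H)$) of $L_{\varepsilon,c}(T,u,k)$ comes for free; this is however exactly the amount of convergence needed to apply Fatou together with the uniform bound from Lemma~\ref{lemma1}.
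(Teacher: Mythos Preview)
Your proof is correct and follows essentially the same approach as the paper: transfer the $L^2(\nu_H)$-convergence of $L_{\varepsilon,c}$ to the shifted processes via quasi-invariance (yielding convergence in $\nu_H$-probability), then pass to the limit in the uniform estimate of Lemma~\ref{lemma1}. The paper's version is terser---it simply states that the result follows ``immediately'' from Lemma~\ref{lemma1} after establishing convergence in probability---whereas you make the Fatou step and the need for an a.s.-convergent subsequence explicit, which is a welcome clarification.
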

\begin{proof} 
We have from \cite{HN05} that $L_{\varepsilon,c}$ is convergent
in $L^{2}(\nu_{H})$ for $\varepsilon\to0$. Hence there is a sequence
$\varepsilon_{n}\to0$ such that $L_{\varepsilon_{n},c}\to L_{c}$
in probability w.r.t.~$\nu_{H}$. Hence $L_{\varepsilon_{n},c}(\cdot+uk)\to L_{c}(\cdot+uk)$
in probability w.r.t.~$\nu_{H}$. Therefore $L_{\varepsilon_{n},c}(\cdot+uk)-L_{\varepsilon_{n},c}(\cdot)\to Y(u,k)$
in probability w.r.t.~$\nu_{H}$. This gives immediately the
desired result by Lemma \ref{lemma1}. 
\end{proof}
Now we have all ingredients to prove the Theorem \ref{thm_main}.
\begin{proof}[Proof of Theorem \ref{thm_main}]
By Lemma \ref{lemma2} we know that for any $k\in K_{H}$  and $u\in\mathbb{R}$ there is
a version $\tilde{Y}(u,k)$, i.e~
$$
\nu_{H}\left(Y(u,k)=\tilde{Y}(u,k)\right)=1,\quad u\in\mathbb{R},
$$
such that
$$
\nu_{H}\left(\tilde{Y}(u,k)\text{ is continuous with respect to }u\in\mathbb{R}\right)=1.
$$
By definition of the fractional Edwards measure 
\[
\nu_{H,g}=\frac{1}{\mathbb{E}(e^{-gL_{c}})}e^{-gL_{c}}\nu_{H},
\]
it is clear that $\nu_{H,g}\circ\tau_{uk}$ is absolutely continuous
w.r.t.~$\nu_{H,g}$ for all $u\in\mathbb{R}$ and $k\in K_{H}$.
Then by Lemma \ref{rem:CMspace} we know that 
\[
a_{uk}=e^{-uY(u,k)}\frac{\exp(u\int \dot{k}\, d B^H)}{\mathbb{E}\big(\exp(u\int \dot{k}\, d B^H)\big)}.
\]
Now let

\[
\tilde{a}_{uk}=e^{-u\tilde{Y}(u,k)}\frac{\exp(u\int \dot{k}\, d B^H)}{\mathbb{E}\big(\exp(u\int \dot{k}\, d B^H)\big)}.
\]
Hence we have, with the previous consideration of $\tilde{Y}(u,k)$
\begin{eqnarray*}
\nu_{H}\left(\tilde{a}_{uk}\text{ is continuous with respect to }u\in\mathbb{R}\right)&=&1,
\end{eqnarray*}
and due to the absolute continuity of $\nu_{H,g}$ w.r.t.~$\nu_{H}$
the same holds for $\nu_{H,g}$ which shows the assertion. 
\end{proof}

\begin{proof}[Proof of Theorem~\ref{thm closable}]
Since the Cameron-Martin space $K_{H}$ is dense in $\mathcal{H}_{H}$ we can find an orthonormal
basis $(k_{n})_{n}$ such that the bilinear form on $\mathcal{F}C_{b}^{\infty}(X_{H})$
can be written as 
\[
\mathcal{E}_{H}(u,v)=\sum_{n=1}^{\infty}\int\frac{\partial u}{\partial k_{n}}\frac{\partial v}{\partial k_{n}}d\nu_{H,g}.
\]
From Proposition 3.7 in \cite{MR92} Chapter I it suffices to show closability
for every $n$ separately. However this is a direct consequence of
Theorem \ref{thm_main} and Corollary 2.5 in \cite{AR91}. Hence as in the proof of Proposition 3.5 in \cite{MR92} Chapter II, Section 3a)
we obtain a Dirichlet form as the closure $\big(\mathcal{E}_{H},D(\mathcal{E}_{H})\big)$
of the above quadratic form. For locality, see Example 1.12(ii) in \cite{MR92} Chapter V and for quasi-regularity, see \cite{MR92} Chapter IV Section 4b).
\end{proof}
 As a direct consequence of Theorem 3.5 in \cite{MR92} Chapter IV and Theorem 1.11 in \cite{MR92} Chapter V we have:
\begin{thm}
\label{thm diffusion} There exists a diffusion process 
\[
\mathbb{M}_{H}=(\Omega,\mathcal{F},(\mathcal{F}_{t})_{t\geq0},(X_{t})_{t\geq0},(P_{\omega})_{\omega\in X_{H}})
\]
with state space $X_H$ which is properly associated with $(\mathcal{E}_{H},\mathcal{D}(\mathcal{E}_{H}))$. In particular, $\mathbb{M}_{H}$ is $\nu_{H,g}$-symmetric and has $\nu_{H,g}$ as invariant measure. 
\end{thm}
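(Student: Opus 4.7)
The plan is to invoke the standard machinery connecting quasi-regular local symmetric Dirichlet forms to diffusion processes; Theorem~\ref{thm closable} has already done all the analytic work, so what remains is largely a citation exercise.

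First, I would recall that by Theorem~\ref{thm closable} the form $\big(\mathcal{E}_H,\mathcal{D}(\mathcal{E}_H)\big)$ is a symmetric, quasi-regular, local Dirichlet form on the real Hilbert space $L^2(X_H,\nu_{H,g})$, with $X_H$ a Polish space (being a separable Banach space) and $\nu_{H,g}$ a probability measure on its Borel $\sigma$-algebra. These are exactly the hypotheses needed to apply the general existence theorem for Markov processes associated with quasi-regular Dirichlet forms.

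Next, I would apply MR92 Chapter~IV, Theorem~3.5, which provides a $\nu_{H,g}$-symmetric right process $\mathbb{M}_H=(\Omega,\mathcal{F},(\mathcal{F}_t)_{t\geq 0},(X_t)_{t\geq 0},(P_\omega)_{\omega\in X_H})$ with state space $X_H$ that is properly associated with $(\mathcal{E}_H,\mathcal{D}(\mathcal{E}_H))$; proper association here means that $p_tf$, for $p_t$ the transition semigroup of $\mathbb{M}_H$ and $f\in L^2(X_H,\nu_{H,g})\cap\mathcal{B}_b(X_H)$, is an $\mathcal{E}_H$-quasi-continuous $\nu_{H,g}$-version of $T_tf$, where $(T_t)_{t\ge 0}$ is the $L^2$-semigroup generated by $(\mathcal{E}_H,\mathcal{D}(\mathcal{E}_H))$. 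The $\nu_{H,g}$-symmetry of $\mathbb{M}_H$ and the fact that $\nu_{H,g}$ is an invariant measure follow from the symmetry of $(\mathcal{E}_H,\mathcal{D}(\mathcal{E}_H))$ together with the Markov property ($T_t\mathbf{1}=\mathbf{1}$, which is automatic as $\nu_{H,g}$ is a probability measure and the constants lie in $\mathcal{D}(\mathcal{E}_H)$).

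Finally, to upgrade this right process to a diffusion, i.e.\ to obtain $P_\omega$-a.s.\ continuous sample paths for quasi-every $\omega\in X_H$, I would invoke MR92 Chapter~V, Theorem~1.11, which establishes the equivalence between locality of the Dirichlet form and continuity of the sample paths of its associated Markov process. Since $(\mathcal{E}_H,\mathcal{D}(\mathcal{E}_H))$ is local by Theorem~\ref{thm closable}, $\mathbb{M}_H$ is therefore a diffusion. There is no real analytic obstacle left at this stage; the entire content has been carried by the closability and quasi-regularity argument supplied by Theorem~\ref{thm_main} and Theorem~\ref{thm closable}, and the present statement is essentially a packaged application of the Albeverio--Ma--R\"ockner construction.
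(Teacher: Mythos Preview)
Your proposal is correct and follows exactly the route the paper takes: the theorem is stated there as a direct consequence of Theorem~3.5 in \cite{MR92} Chapter~IV and Theorem~1.11 in \cite{MR92} Chapter~V, applied to the quasi-regular local symmetric Dirichlet form supplied by Theorem~\ref{thm closable}. There is nothing to add.
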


\section*{Conclusion}
In this work we showed the existence of a Markov process having the fractional Edwards measure for $Hd=1$ as an invariant measure. The process is obtained as a Hunt process associated to the symmetric Dirichlet form $\mathcal{E}_H$. Closability of the form was shown using quasi-translation invariance of the  fractional Edwards measure w.r.t.~shifts along the Cameron-Martin space. This generalizes the results found in \cite{BFS16} for the case $Hd<1$, where the closability was proved by integration by parts. This is not possible in the present case ($Hd=1$) due to the lack of Meyer-Watanabe differentiability of the density. 
The explicit representation of the generator is known in the case $Hd<1$ by standard integration by parts techniques, see \cite{BFS18}. In the case $Hd=1$ this however is unknown. To characterize the Markov process in the present situation we plan to use Mosco convergence in the Hurst parameter $H$ for approximating Dirichlet forms and hence to obtain convergence of the associated operator semigroups. 

\section*{Acknowledgement}

We truly thank M.~J.~Oliveira and R.~V.~Mendes for the hospitality
at the Academy of Science in Lisbon. We thank G.~Trutnau for helpful
discussions. Financial support by the Mathematics Department of the
University of Kaiserslautern for research visits at Lisbon and project
UID/MATH/04674/2013 are gratefully acknowledged.

\bibliographystyle{plain} 

\end{document}